\theoremstyle{plain}
\newtheorem{theorem}{Theorem}
\newtheorem{lemma}[theorem]{Lemma}
\newtheorem{proposition}[theorem]{Proposition}
\theoremstyle{definition}
\newtheorem{remark}[theorem]{Remark}
\theoremstyle{remark}
\numberwithin{equation}{section}
\newcommand{\E}{E}
\newcommand{\probp}{P}
\newcommand{\probq}{Q}
\newcommand{\R}{\mathbb{R}}
\newcommand{\N}{\mathbb{N}}
\newcommand{\Q}{Q}
\newcommand{\cF}{{\mathcal{F}}}
\newcommand{\cA}{\mathcal{A}}
\newcommand{\cP}{\mathcal{P}}
\newcommand{\ind}{\mathbbm 1}
\newcommand{\VaR}{\mathop {\rm VaR}\nolimits}
\newcommand{\ES}{\mathop {\rm ES}\nolimits}
\newcommand{\WCE}{\mathop {\rm WCE}\nolimits}
\def\@fnsymbol#1{\ensuremath{\ifcase#1\or 1\or 2\or 3\or 4\or 5\or 6\or 7\or 8\else\@ctrerr\fi}}
\renewcommand{\epsilon}{\varepsilon}
\newcommand{\diff}{\mathrm{d}}
\newcommand{\dd}{\,\mathrm{d}}
\begin{document}

\title{An elementary proof of the\\ dual representation of Expected Shortfall\footnote{We thank Hans F\"{o}llmer, Alexander Schied and Ruodu Wang for helpful bibliographical suggestions and comments.}}

\author{
Martin Herdegen \\ \textit{University of Warwick}
\and
Cosimo Munari \\ \textit{University of Zurich}
}

\date{\today}

\maketitle

\begin{abstract}
We provide an elementary proof of the dual representation of Expected Shortfall on the space of integrable random variables over a general probability space. Unlike the results in the extant literature, our proof only exploits basic properties of quantile functions and can thus be easily implemented in any graduate course on risk measures. As a byproduct, we obtain a new proof of the subadditivity of Expected Shortfall.
\end{abstract}

\bigskip
\noindent\textbf{Mathematics Subject Classification (2020):} 49N15, 60E15, 91G70

\bigskip
\noindent\textbf{JEL Classification:}  G22, D81, C61

\bigskip
\noindent\textbf{Keywords:} Expected Shortfall, Dual Representation, Subadditivity




\section{Introduction}

The debate on capital adequacy and solvency regulation in the past thirty years has been dominated by two competing risk measures: Value at Risk and Expected Shortfall. As is well known, Value at Risk became popular as part of the RiskMetrics package developed by J.P.\ Morgan in the 1990s with the aim to provide market participants with a set of techniques and data
to measure market risks in their portfolios; see \cite{RiskMetrics}. Shortly after, Value at Risk was chosen by the Basel Committee on Banking Supervision as the reference market risk measure in the Basel II framework and was later to become the reference credit risk measure in the Basel III framework as well as the reference metric used by European insurance regulators for the computation of solvency capital requirements within the Solvency II framework; see \cite{BaselII,BaselIIIb,SolvencyII}. The introduction of Value at Risk raised a number of concerns about its ability to properly capture (tail) risks and to create the right incentives towards portfolio diversification, which eventually led to the definition of Expected Shortfall in the early 2000s; see \cite{AcerbiTasche2002,ArtznerDelbaenEberHeath1999,BertsimasLaupreteSamarov2004,Pflug2000,RockafellarUryasev2000,RockafellarUryasev2002}. Currently, Expected Shortfall has replaced Value at Risk as the reference market risk measure in the Basel III framework and is employed to compute solvency capital requirements for insurance and reinsurance companies in the Swiss Solvency Test; see \cite{BaselIII,SST}. Both risk measures have been the subject of an intense research program that was aimed to uncover their relative merits and drawbacks both from a theoretical and empirical point of view; see, e.g., \cite{AcerbiSzekely2014,BarrieuScandolo2015,BignozziBurzoniMunari2020,BurzoniMunariWang2022,ContDeguestScandolo2010,KochMunari2016,KraetschmerSchiedZaehle2014,MaoWang2020,MunariWeberWilhelmy2022,WangZitikis2021,Weber2018,YamaiYoshiba2005,Ziegel2016}.

\smallskip

A key difference at a theoretical level is that Expected Shortfall is a coherent risk measure in the sense of \cite{ArtznerDelbaenEberHeath1999} whereas Value at Risk is not as it fails to satisfy the important property of subadditivity. Being coherent, Expected Shortfall can be equivalently described as a ``robust expectation'', i.e., as a supremum of expectations over a suitable family of probability measures. More precisely, let $(\Omega,\cF,\probp)$ be a probability space 
and denote by $L^1$ and $L^\infty$ the space of $\probp$-integrable and $\probp$-almost surely bounded random variables, respectively. 
Following \cite{FoellmerSchied2002}, we define the \emph{Expected Shortfall}, also called {\em Average Value at Risk}, of $X\in L^1$ at level $\alpha\in(0,1)$ by
\[
\ES_\alpha(X) := \frac{1}{\alpha}\int_0^\alpha \VaR_\beta(X) \dd \beta,
\]
where $\VaR_\beta(X) := -\inf\{x\in\R \,; \ \probp(X\le x)>\beta\}$ is the \emph{Value at Risk} of $X$ at level $\beta \in (0, 1)$, which coincides, up to a sign, with the upper quantile of $X$ at level $\beta$. We denote by $\cP$ the collection of all probability measures on $\cF$, and for $\alpha \in (0, 1)$ we set
\begin{equation}
\label{eq:cP alpha}
\cP_\alpha := \left\{\probq\in\cP \,; \ \probq\ll\probp, \ \frac{\dd\probq}{\dd\probp}\le\frac{1}{\alpha} \ \mbox{$\probp$-a.s.}\right\}.
\end{equation}
We can then express the Expected Shortfall of $X\in L^1$ as a ``robust expectation'' over $\cP_\alpha$, namely
\begin{equation}
\label{eq: ES intro}
\ES_\alpha(X) = \sup_{\probq\in\cP_\alpha}\E_\probq(-X).
\end{equation}
This representation corresponds to the classical ``Fenchel-Moreau-Rockafellar'' dual representation from convex analysis applied to Expected Shortfall; see \cite[Theorem 2.3.3]{Zalinescu} for a general formulation. The representation is informative {\em per se} and becomes a useful tool in a variety of applications featuring Expected Shortfall, e.g., to pricing, hedging, portfolio selection; see \cite{ArducaMunari2023,Cherny2008,HerdegenKhan2022,MadanCherny2010,PflugRuszczynski2005,Rockafellar1974,RockafellarUryasev2000,RockafellarUryasev2002,RuszczynskiShapiro2006}.

\smallskip

Historically, the intuition behind the dual representation of Expected Shortfall can be traced back to the link between Expected Shortfall and the Worst Conditional Expectation at level $\alpha\in(0,1)$, which, following \cite{ArtznerDelbaenEberHeath1999}, is defined for every $X\in L^1$ by
\[
\WCE_\alpha(X) := \sup_{A\in\cF,\,\probp[A]>\alpha}\E_\probp(-X\vert A).
\]
In a nonatomic setting, the Worst Conditional Expectation admits a dual representation in the form of the right hand side of \eqref{eq: ES intro}; this was established in  \cite[Example 4.2]{Delbaen2002}.\footnote{The result can already be found in the preprint version from March 2000.} This result then automatically delivers the dual representation of Expected Shortfall for atomless probability spaces because the two risk measures coincide in this setting; see, e.g., \cite[Theorem 6.10]{Delbaen2002}. However, as shown in \cite{AcerbiTasche2002}, the two risk measures (and their dual representations) do not coincide on general probability spaces, thereby requiring an independent study of Expected Shortfall in a general setting.

\smallskip

To the best of our knowledge, the first complete derivation of the dual representation of Expected Shortfall was obtained in \cite[Theorem 4.39]{FoellmerSchied2002} for bounded random variables. The proof is based on two steps. First, the equivalent formulation of Expected Shortfall as a tail conditional expectation, which was originally proved in \cite{AcerbiTasche2002,Pflug2000}, is used to show that Expected Shortfall dominates from above each expectation in \eqref{eq: ES intro}. Second, a Neyman-Pearson type argument is employed to show that Expected Shortfall coincides with one of those expectations for a suitable choice of the underlying probability so that one has actually equality in \eqref{eq: ES intro}. 
A similar two-step argument is used in \cite{PflugRoemisch2007} but the representation is stated for integrable random variables defined on a nonatomic probability space. In the aforementioned references the dual representation was also used to derive -- as a direct byproduct -- another important property of Expected Shortfall, subadditivity. This duality-based proof of subadditivity is included in the survey article \cite{EmbrechtsWang2015}, where it is said that ``[this proof] is probably the most mathematically advanced among all proofs in this paper''. In fact, the authors recommend it ``in an advanced course where the axiomatic theory of coherent risk measures is a point of interest''.

\smallskip

The goal of this short note is to provide an elementary proof of the dual representation \eqref{eq: ES intro} of Expected Shortfall, and, as a byproduct, a new proof of its subadditivity, for integrable random variables over a general probability space. Our approach only relies on basic properties of quantile functions and standard results from measure theory. In particular, it does not require the equivalent formulation of Expected Shortfall as a tail conditional expectation. We first obtain the desired representation for simple random variables. A straightforward limiting argument allows to extend it to bounded random variables. Finally, the continuity from above of quantile functions makes it possible to further extend it to all integrable random variables. The advantage of this multi-layer approach is that one can tailor to the reference audience the choice of the model space and, hence, the overall mathematical complexity of the argument (finite/general probability space, simple/bounded/integrable random variables). We believe that the proof in the present note is considerably simpler than the ones in the extant literature and can thus be successfully implemented in any graduate course on risk measures. In this pedagogical spirit, we collect all the basic properties of quantile functions that are used in the note in the appendix and provide a full proof.


\section{Dual representation}
\label{sect: dual representation}

In this section we establish the dual representation \eqref{eq: ES intro} of Expected Shortfall. As a preliminary step, we collect some elementary properties of Expected Shortfall that are used in the proof. They are direct consequences of elementary properties of quantile functions recorded in Lemma \ref{lem: quantiles} in the appendix. In particular, note that Expected Shortfall is well defined by Lemma \ref{lem: quantiles}(a) and is continuous from above by combining Lemma \ref{lem: quantiles}(d) with monotone convergence.

\begin{proposition}
\label{prop:ES}
For every $\alpha\in(0,1)$ the following statements hold:
\begin{enumerate}
    \item $\ES_\alpha(X)\in \R$ for every $X\in L^1$.
    \item $\ES_\alpha(X)\le\ES_\alpha(Y)$ for all $X,Y\in L^1$ such that $X\ge Y$ $\probp$-a.s.\ (monotonicity).
    \item $\ES_\alpha(X+c)=\ES_\alpha(X)-c$ for all $X\in L^1$ and $c\in\R$ (cash invariance).
    \item $\ES_\alpha(X_n)\uparrow\ES_\alpha(X)$ for all $(X_n)\subset L^1$ and $X\in L^1$ with $X_n\downarrow X$ $\probp$-a.s.
    \item For every $X\in L^1$, there is $k\in\N$ such that $\ES_\alpha(X)=\ES_\alpha(\min\{X,m\})$ for $m\in[k,\infty)$.
\end{enumerate}
\end{proposition}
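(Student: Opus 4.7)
The plan is to reduce each of the five statements to a corresponding property of upper quantile functions from Lemma~\ref{lem: quantiles}, exploiting that $-\VaR_\beta(X)$ is by definition the upper quantile of $X$ at level $\beta$. I would rely throughout on Lemma~\ref{lem: quantiles}(a), which guarantees that $\beta \mapsto \VaR_\beta(X)$ is real-valued and integrable on $(0,1)$ whenever $X \in L^1$.

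Parts (a)--(c) would then follow simply by integrating the corresponding pointwise statements on quantiles: integrability on $(0,1)$ implies integrability on $(0,\alpha)$, giving (a); monotonicity of upper quantiles gives $\VaR_\beta(X) \le \VaR_\beta(Y)$ pointwise whenever $X \ge Y$ $\probp$-a.s., which yields (b) after averaging; and the translation identity $\VaR_\beta(X+c) = \VaR_\beta(X) - c$ gives (c). For (d), assuming $X_n \downarrow X$ $\probp$-a.s., Lemma~\ref{lem: quantiles}(d) yields $\VaR_\beta(X_n) \uparrow \VaR_\beta(X)$ for every $\beta \in (0,1)$, and monotone convergence applied to the non-negative increasing sequence $\VaR_\beta(X_n) - \VaR_\beta(X_1)$ (whose pointwise limit is integrable on $(0,\alpha)$ by part (a)) delivers $\ES_\alpha(X_n) \uparrow \ES_\alpha(X)$.

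The only part requiring a genuine argument is (e), which I expect to be the main (mild) obstacle. Since $\beta \mapsto -\VaR_\beta(X)$ is non-decreasing and finite on $(0,1)$, the number $-\VaR_\alpha(X)$ is a finite real, so I pick any integer $k \ge -\VaR_\alpha(X)$. For $m \ge k$ and $\beta \in (0,\alpha)$, the key observation is that $\probp(\min\{X, m\} \le y) = \probp(X \le y)$ for $y < m$ while it equals $1$ for $y \ge m$, which after a short case analysis yields that the upper quantile of $\min\{X, m\}$ at level $\beta$ equals $\min\{-\VaR_\beta(X), m\}$. Since $-\VaR_\beta(X) \le -\VaR_\alpha(X) \le k \le m$, this minimum equals $-\VaR_\beta(X)$, so $\VaR_\beta(\min\{X, m\}) = \VaR_\beta(X)$ for every $\beta \in (0, \alpha)$. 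Integrating in $\beta$ over $(0, \alpha)$ concludes.

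Beyond (e), no step presents real difficulty; the whole proposition is essentially a transfer of the quantile lemma through an integral. The one place deserving care is the minor case analysis around whether $-\VaR_\beta(X)$ is strictly less than, equal to, or hypothetically exceeds $m$, which is neatly resolved by the choice $k \ge -\VaR_\alpha(X)$ combined with monotonicity of the upper quantile.
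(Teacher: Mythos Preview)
Your proposal is correct and follows exactly the paper's approach: each item is reduced to the corresponding quantile property in Lemma~\ref{lem: quantiles}, with monotone convergence supplying (d). Your direct argument for (e) via the identity $q^+_\beta(\min\{X,m\})=\min\{q^+_\beta(X),m\}$ together with the choice $k\ge q^+_\alpha(X)$ is equivalent to---and arguably a touch cleaner than---the paper's proof of Lemma~\ref{lem: quantiles}(e), which makes a small detour through an auxiliary level $\gamma\in(\alpha,1)$.
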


\smallskip

In order to prove \eqref{eq: ES intro}, we first establish a link between the sign of Expected Shortfall and the sign of expectations taken under probabilities in the dual set $\cP_\alpha$ from \eqref{eq:cP alpha}.
In the language of risk measures, this is equivalent to establishing a dual representation of the acceptance set associated with Expected Shortfall. 

\begin{proposition}
\label{prop: key inequality}
Let $\alpha\in(0,1)$. For every $X \in L^1$ the following statements hold:
\begin{enumerate}
    \item If $\ES_\alpha(X)\le0$, then $\E_\probq\left(X\right)\ge0$ for every $\Q \in \cP_\alpha$. 
    \item If $\ES_\alpha(X) > 0$, then $\E_\probq\left(X\right)<0$ for some $\Q \in \cP_\alpha$.
\end{enumerate}
\end{proposition}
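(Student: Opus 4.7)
The plan is to prove the proposition in three stages of increasing generality---simple, bounded, and integrable random variables---exactly as the authors outline in the introduction.

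\textbf{Stage 1 (simple $X$).} Write $X = \sum_{i=1}^n x_i \one_{A_i}$ with $x_1 < \cdots < x_n$ and $p_i := \P(A_i) > 0$, and let $k$ be the unique index with $\sum_{i<k} p_i < \alpha \le \sum_{i \le k} p_i$. The step-function structure of $\beta \mapsto \VaR_\beta(X)$ gives an explicit formula for $\ES_\alpha(X)$. For (a), parametrise any $\Q \in \cP_\alpha$ by $w_i := \E_\P[\one_{A_i}\, d\Q/d\P] \in [0, p_i/\alpha]$ summing to $1$; a rearrangement inequality---pushing weight toward the smallest $x_i$s subject to the caps $p_i/\alpha$---gives $\E_\Q(X) = \sum_i w_i x_i \ge -\ES_\alpha(X) \ge 0$. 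For (b), the density $Z^* := \tfrac{1}{\alpha}\one_{\{X<x_k\}} + \gamma\,\one_{\{X=x_k\}}$, with $\gamma \in [0, 1/\alpha]$ chosen so that $\E_\P[Z^*] = 1$, defines $\Q^* \in \cP_\alpha$ for which direct computation yields $\E_{\Q^*}(X) = -\ES_\alpha(X) < 0$.

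\textbf{Stage 2 (bounded $X$).} Approximate $X \in L^\infty$ from above by simple $X_n \ge X$ with $X_n \downarrow X$ (dyadic rounding-up); by Proposition \ref{prop:ES}(d), $\ES_\alpha(X_n) \uparrow \ES_\alpha(X)$. For (a), monotonicity gives $\ES_\alpha(X_n) \le 0$, Stage 1(a) gives $\E_\Q(X_n) \ge 0$, and dominated convergence in $L^1(\Q)$ (using the uniform bound $d\Q/d\P \le 1/\alpha$) yields $\E_\Q(X) \ge 0$. For (b), once $\ES_\alpha(X_n) > 0$, Stage 1(b) supplies $\Q_n$ with $\E_{\Q_n}(X_n) = -\ES_\alpha(X_n)$, and $X \le X_n$ forces $\E_{\Q_n}(X) \le -\ES_\alpha(X_n)$; so $\Q_n$ works for $n$ large, and the argument actually yields a quantitative refinement, namely that for every $\epsilon > 0$ some $\Q \in \cP_\alpha$ satisfies $\E_\Q(X) \le -\ES_\alpha(X) + \epsilon$.

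\textbf{Stage 3 (integrable $X$).} For (a), set $Y_n := X \vee (-n) \downarrow X$, so $\ES_\alpha(Y_n) \le \ES_\alpha(X) \le 0$ by Proposition \ref{prop:ES}(d). Since $Y_n$ need not lie in $L^\infty$, Proposition \ref{prop:ES}(e) provides $m_n$ with $\ES_\alpha(Y_n \wedge m_n) = \ES_\alpha(Y_n) \le 0$, and Stage 2(a) gives $\E_\Q(Y_n \wedge m_n) \ge 0$; passing two monotone limits (first $m \to \infty$, then $n \to \infty$, with $\Q$-integrable domination) then yields $\E_\Q(X) \ge 0$. For (b), the delicate point---the main obstacle---is that capping $X$ from above \emph{decreases} $\E_\Q(X)$, which goes against the desired inequality. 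The remedy is to decouple the two truncations: pick $M$ so large that $\ES_\alpha(X \wedge M) = \ES_\alpha(X)$ (Proposition \ref{prop:ES}(e)) and, since $X^+ \in L^1$, simultaneously $\tfrac{1}{\alpha}\E_\P[(X-M)^+] < \tfrac{1}{4}\ES_\alpha(X)$; then pick $n$ so that $\ES_\alpha(\tilde X_n) > \tfrac{3}{4}\ES_\alpha(X)$, where $\tilde X_n := (X \wedge M) \vee (-n) \in L^\infty$ (Proposition \ref{prop:ES}(d)). The quantitative form of Stage 2(b) produces $\Q \in \cP_\alpha$ with $\E_\Q(\tilde X_n) < -\tfrac{1}{2}\ES_\alpha(X)$, and the pointwise bound $X \le \tilde X_n + (X-M)^+$ combined with the uniform estimate $\E_\Q[(X-M)^+] \le \tfrac{1}{\alpha}\E_\P[(X-M)^+]$, valid for every $\Q \in \cP_\alpha$, finally gives $\E_\Q(X) < -\tfrac{1}{4}\ES_\alpha(X) < 0$, as required.
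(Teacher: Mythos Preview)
Your proposal is correct and follows the same three-stage architecture (simple $\to$ bounded $\to$ integrable) as the paper. The only differences are technical: in Stage~2 you use monotone approximation $X_n\downarrow X$ together with Proposition~\ref{prop:ES}(d), whereas the paper uses a single uniform approximant $X'$ with $X\le X'\le X+\varepsilon$; and in Stage~3(b) you control the upper tail $(X-M)^+$ separately via $\tfrac{1}{\alpha}\E_\probp[(X-M)^+]$ and invoke the quantitative form of Stage~2(b), whereas the paper replaces $X$ by $X+\varepsilon$ and uses the global estimate $\E_\probq|X-X_{m,n}|\le\tfrac{1}{\alpha}\E_\probp|X-X_{m,n}|$---but these are equivalent ways of absorbing the same truncation error.
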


\begin{proof}
{\bf Step 1: Discrete random variables}. Let $X$ be a discrete random variable taking the values $x_1 < \cdots < x_N$ with probabilities $p_1, \ldots, p_N > 0$. For $\Q \in \cP_\alpha$ set $q_k :=  \Q(X = x_k)$ and note that $\sum_{k=1}^Nq_k=1$. Let $K=\min\{h\in\{1,\dots,N\} \,; \ \sum_{k =1}^h p_k \geq \alpha\}$. Then 
\begin{align}
\label{eq:pf:prop:key inequality:ES ineq}
\ES_\alpha(X) = \frac{1}{\alpha} \left(\sum_{k =1}^{K-1} -x_k p_k - x_K \bigg(\alpha - \sum_{k=1}^{K-1} p_k \bigg)\right)=  \sum_{k =1}^{K-1} (x_K - x_k) \frac{p_k}{\alpha} - x_K,
\end{align}
\begin{align}
\label{eq:pf:prop:key inequality:EQ eq}
\E_\probq(X) = \sum_{k=1}^Nx_kq_k = 
\sum_{k =1}^{K-1} -(x_K - x_k) q_k + x_K + \sum_{k =K+1}^N (x_k - x_K) q_k.
\end{align}

(a) Suppose that $\ES_\alpha(X)\le0$ and take $\Q \in \cP_\alpha$. As $x_K-x_k>0$ and $q_k \leq \frac{p_k}{\alpha}$ for every $k\in\{1,\dots,K-1\}$ and $x_k-x_K>0$ for every $k\in\{K+1,\dots,N\}$, \eqref{eq:pf:prop:key inequality:ES ineq} and \eqref{eq:pf:prop:key inequality:EQ eq} give
\[
\E_\probq(X) \ge \E_\probq(X)+\sum_{k =K+1}^N -(x_k - x_K) q_k = \sum_{k =1}^{K-1} -(x_K - x_k) q_k + x_K \ge -\ES_\alpha(X) \ge 0.
\]

(b) Suppose that $\ES_\alpha(X)>0$. We always find $\Q \in \cP_\alpha$ such that $q_k = \frac{1}{\alpha} p_k$ for $k \in \{1, \ldots, K-1\}$ and $q_k = 0$ for $k \in \{K+1, \ldots, N\}$. Then, \eqref{eq:pf:prop:key inequality:EQ eq} together with \eqref{eq:pf:prop:key inequality:ES ineq} give
\begin{equation*}
\E_\probq(X) = \sum_{k =1}^{K-1} -(x_K - x_k) q_k + x_K = \sum_{k =1}^{K-1} -(x_K - x_k) \frac{p_k}{\alpha} + x_K = - \ES_\alpha(X) < 0.
\end{equation*}

\smallskip

{\bf Step 2: Bounded random variables}. Let $X\in L^\infty$. Below we use the elementary fact that $X$ can be approximated uniformly from above by discrete random variables.

(a) Suppose that $\ES_\alpha(X)\le0$ but assume there is $\probq \in \cP_\alpha$ with $\E_\probq(X) < 0$. Then, we find a discrete random variable $X'$ satisfying $X-\frac{\E_\probq(X)}{2} \geq  X' \geq X$. This yields $\E_\probq(X') \leq \frac{\E_\probq(X)}{2} < 0$ and $\ES_\alpha(X') \leq \ES_\alpha(X) \leq 0$ by monotonicity of $\ES_\alpha$, which contradicts point (a) in Step 1.

(b) If $\ES_\alpha(X)>0$, then there exists a discrete random variable $X'$ with $X+\frac{\ES_\alpha(X)}{2} \geq  X' \geq X$. As a result, $\ES_\alpha(X') \geq \frac{\ES_\alpha(X)}{2} > 0$ by monotonicity and cash invariance of $\ES_\alpha$. It follows from point (b) in Step 1 that $\E_\probq(X) \leq \E_\probq(X')< 0$ for some $\probq \in \cP_\alpha$.

\medskip

{\bf Step 3: Integrable random variables}. Let $X\in L^1$ and for all $m,n\in\N$ define $X_{m,n}:=\max\{\min\{X,m\},-n\}\in L^\infty$. It follows from dominated convergence that $\E_\probp(|X-X_{m,n}|)\to0$. As for every $\probq\in\cP_\alpha$, we have $\frac{\diff \Q}{\diff P}\le\frac{1}{\alpha}$ $\probp$-a.s., this gives $\E_\probq(|X-X_{m,n}|)\to0$. Below we additionally use that, by Proposition~\ref{prop:ES}(e), there exists $k\in\N$ such that $\ES_\alpha(\min\{X,m\})=\ES_\alpha(X)$ for every $m\in\N$ with $m\ge k$.

(a) Assume that $\ES_\alpha(X)\le0$ and take any $\probq\in\cP_\alpha$. For all $m,n\in\N$ with $m\ge k$ we have $\ES_\alpha(X_{m,n})\le\ES_\alpha(\min\{X,m\})=\ES_\alpha(X)\le0$ by monotonicity of $\ES_\alpha$, whence $\E_\probq(X_{m,n})\ge0$ by (a) in Step 2. This yields $\E_\probq(X)\ge0$ as well.

(b) Suppose that $\ES_\alpha(X)>0$ and take any $\varepsilon\in(0,\ES_\alpha(X))$. Note that, for every $m\in\N$, we have $X_{m,n}+\varepsilon\downarrow \min\{X,m\}+\varepsilon$, which implies $\ES_\alpha(X_{m,n}+\varepsilon)\to\ES_\alpha(\min\{X,m\}+\varepsilon)$ by Proposition~\ref{prop:ES}(d). In particular, for every $m\in\N$ with $m\ge k$, we have $\ES_\alpha(X_{m,n}+\varepsilon)\to\ES_\alpha(X+\varepsilon)>0$ by cash invariance of $\ES_\alpha$. Hence, we can take $m,n\in\N$ large enough to obtain both $\ES_\alpha(X_{m,n}+\varepsilon)>0$ and $\E_\probp(|X-X_{m,n}|)<\alpha\varepsilon$. By (b) in Step 2, we find $\probq\in\cP_\alpha$ such that $\E_\probq(X_{m,n}+\varepsilon)<0$. It remains to observe that
\[
\E_\probq(X) \le \E_\probq(|X-X_{m,n}|)+\E_\probq(X_{m,n}) \le \frac{1}{\alpha}\E_\probp(|X-X_{m,n}|)-\varepsilon < 0.\qedhere
\]
\end{proof}

\smallskip

\noindent The preceding result delivers at once the desired representation of Expected Shortfall.

\begin{theorem}
\label{theo: dual representation}
Let $\alpha\in(0,1)$. For every $X \in L^1$ the following representation holds:
\begin{equation}
\label{eq:theo: dual representation}
\ES_{\alpha} (X) = \sup_{\Q \in \cP_\alpha} \E_\probq\left(-X\right).
\end{equation}
In particular, $\ES_\alpha$ is subadditive, i.e., for all $X,Y\in L^1$,
\[
\ES_\alpha(X+Y) \le \ES_\alpha(X)+\ES_\alpha(Y).
\]
\end{theorem}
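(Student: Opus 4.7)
The plan is to deduce the dual representation directly from Proposition~\ref{prop: key inequality} by a cash-invariance trick, and then obtain subadditivity as a one-line consequence of the representation.

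The main step is to observe that Proposition~\ref{prop: key inequality}(a) and the contrapositive of Proposition~\ref{prop: key inequality}(b) combine into the biconditional
\[
\ES_\alpha(Z) \le 0 \iff \E_\probq(Z) \ge 0 \text{ for every } \probq \in \cP_\alpha,
\]
valid for every $Z \in L^1$. Fixing $X \in L^1$ and any $c \in \R$, I would apply this to $Z = X + c$ and use cash invariance (Proposition~\ref{prop:ES}(c)) to rewrite the left-hand side as $\ES_\alpha(X) \le c$ and the right-hand side as $\E_\probq(-X) \le c$ for every $\probq \in \cP_\alpha$, i.e., as $\sup_{\probq \in \cP_\alpha} \E_\probq(-X) \le c$. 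Since the equivalence $\ES_\alpha(X) \le c \iff \sup_{\probq \in \cP_\alpha} \E_\probq(-X) \le c$ holds for every real $c$, the two numbers must be equal, which is \eqref{eq:theo: dual representation}.

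Subadditivity then follows immediately from the representation: for any $X, Y \in L^1$ and any $\probq \in \cP_\alpha$, linearity of expectation under $\probq$ gives
\[
\E_\probq(-(X+Y)) = \E_\probq(-X) + \E_\probq(-Y) \le \ES_\alpha(X) + \ES_\alpha(Y),
\]
and taking the supremum over $\probq \in \cP_\alpha$ on the left yields $\ES_\alpha(X+Y) \le \ES_\alpha(X) + \ES_\alpha(Y)$.

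There is no real obstacle here, since all the work has been done in Proposition~\ref{prop: key inequality}; the only thing to be mildly careful about is the direction of the cash-invariance shift (the identity $\ES_\alpha(X+c) = \ES_\alpha(X) - c$ means that $\ES_\alpha(X) \le c$ is equivalent to $\ES_\alpha(X+c) \le 0$, not to $\ES_\alpha(X-c) \le 0$). Everything else is bookkeeping.
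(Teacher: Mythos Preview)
Your proof is correct and follows essentially the same approach as the paper: both use cash invariance to reduce the dual representation to Proposition~\ref{prop: key inequality}, showing that $\ES_\alpha(X)\le c$ iff $\sup_{\probq\in\cP_\alpha}\E_\probq(-X)\le c$ for every real $c$, and then derive subadditivity as an immediate consequence of the representation. The only cosmetic difference is that the paper records the two directions separately (for $\ES_\alpha(X)\le m$ and $\ES_\alpha(X)>m$) rather than packaging them into a single biconditional as you do.
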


\begin{proof}
Let $X\in L^1$. For $m \in \R$, by cash invariance of $\ES_\alpha$ and Proposition~\ref{prop: key inequality}(a) and (b),
\begin{align*}
\ES_\alpha(X) \leq  m&\;\; \Rightarrow \;\; \ \ES_\alpha(X+m) \leq 0 \;\; \Rightarrow \;\;
\inf_{\probq \in \cP_\alpha}\E_\probq(X+m)\ge0
\;\; \Rightarrow \;\;
\sup_{\probq \in \cP_\alpha}\E_\probq(-X )\leq m, \\
\ES_\alpha(X) > m&\;\; \Rightarrow \;\; \ \ES_\alpha(X+m) > 0 \;\; \Rightarrow \;\;
\inf_{\probq \in \cP_\alpha}\E_\probq(X+m) < 0
\;\; \Rightarrow \;\;
\sup_{\probq \in \cP_\alpha}\E_\probq(-X )>m.
\end{align*}
Combining the inequalities above yields \eqref{eq:theo: dual representation}. Finally, subadditivity follows directly from the the right-hand side of \eqref{eq:theo: dual representation} and subadditivity of the supremum.
\end{proof}

\smallskip

\begin{remark}
We have opted to divide the proof of Proposition \ref{prop: key inequality} into three steps to enhance versatility. If the interest is only on the dual representation of Expected Shortfall on a finite probability space or on a general probability space for bounded random variables, then one has to read only up to the end of  Step 1 or Step 2, respectively. The proof of Theorem \ref{theo: dual representation} is identical in these cases.
\end{remark}


\appendix

\section{Some properties of quantile functions}

Fix a probability space $(\Omega,\cF,\probp)$. For a random variable $X\in L^1$ we define the cumulative probability at $x\in\R$ by $F_X(x):=\probp(X\le x)$ and the upper quantile at level $\alpha\in(0,1)$ by
\[
q^+_\alpha(X) := \inf\{x\in\R \,; \ F_X(x)>\alpha\}.
\]
For a sequence $(X_n)\subset L^1$ we write $X_n\downarrow X$ whenever $X_n\ge X_{n+1}$ for every $n\in\N$ and $X_n\to X$ $\probp$-a.s.. Moreover, we set $X^+:=\max\{X,0\}$ and $X^-:=\max\{-X,0\}$. The next lemma collects some basic properties of quantile functions. We include a proof for completeness.

\begin{lemma}
\label{lem: quantiles}
For all $X\in L^1$ and $\alpha\in(0,1)$ the following statements hold:
\begin{enumerate}
    \item $q^+_\alpha(X)\in\R$ and $\int_0^1 |q^+_\beta(X)| \dd \beta < \infty$.
    \item $q^+_\alpha(X)\ge q^+_\alpha(Y)$ for all $X,Y\in L^1$ with $X\ge Y$ $\probp$-a.s.
    \item $q^+_\alpha(X+c)=q^+_\alpha(X)+c$ for all $X\in L^1$ and $c\in\R$.
    \item $q^+_\alpha(X_n)\downarrow q^+_\alpha(X)$ for every $(X_n)\subset L^1$ with $X_n\downarrow X$.
    \item There is $k\in\N$ such that $q^+_\beta(X)=q^+_\beta(\min\{X,m\})$ for all $\beta\in(0,\alpha)$ and $m\in\N$ with $m\ge k$.
\end{enumerate}
\end{lemma}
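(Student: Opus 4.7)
The plan is to prove the five parts directly from the definition $q^+_\alpha(X)=\inf\{x\in\R \,; \ F_X(x)>\alpha\}$, using only the monotonicity, right-continuity and limiting behaviour of the distribution function $F_X$. I would dispatch parts (b) and (c) first: for (b), $X \geq Y$ $\probp$-a.s.\ gives $F_X \leq F_Y$ pointwise, so $\{F_X > \alpha\} \subseteq \{F_Y > \alpha\}$ and the infima reverse the inclusion; for (c), $F_{X+c}(x) = F_X(x-c)$ yields the translation identity by the substitution $y=x-c$. Finiteness of $q^+_\alpha(X)$ in (a) is equally quick: the set $\{F_X > \alpha\}$ is nonempty since $F_X(x) \to 1$ as $x \to \infty$ and $\alpha < 1$, and bounded below since $F_X(x) \to 0$ as $x \to -\infty$ and $\alpha > 0$.

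For the integrability assertion in (a), I would use a layer-cake/Fubini argument built on the equivalence $q^+_\beta(X) > t$ iff $F_X(t) \leq \beta$, which follows from the monotonicity of $F_X$ by evaluating at $x=t$. An analogous check gives $q^+_\beta(X) < -t$ iff $\probp(X < -t) > \beta$ for $t>0$. These identify $\{\beta \in (0,1) : q^+_\beta(X) > t\}$ and $\{\beta \in (0,1) : q^+_\beta(X) < -t\}$ as half-open subintervals of $(0,1)$ of Lebesgue measure $\probp(X > t)$ and $\probp(X < -t)$ respectively, and Fubini yields
\[
\int_0^1 |q^+_\beta(X)|\dd\beta = \int_0^\infty \probp(X > t)\dd t + \int_0^\infty \probp(X < -t)\dd t = \E_\probp(X^+) + \E_\probp(X^-) = \E_\probp(|X|) < \infty.
\]

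I expect the continuity from above in (d) to be the main obstacle. Monotonicity from (b) already gives that $q^+_\alpha(X_n)$ decreases to some limit $L \geq q^+_\alpha(X)$, so the task is to rule out $L > q^+_\alpha(X)$. The plan is a contradiction argument: assuming $L > q^+_\alpha(X)$, I would choose $y$ with $q^+_\alpha(X) < y < L$ and then, by definition of the infimum, some $z \in [q^+_\alpha(X), y)$ with $F_X(z) > \alpha$. Since $X_n \downarrow X$, the events $\{X_n \leq y\}$ are increasing in $n$, and their union contains $\{X < y\} \supseteq \{X \leq z\}$ up to a null set; continuity of probability from below then gives $\lim_n \probp(X_n \leq y) \geq \probp(X \leq z) > \alpha$, so $q^+_\alpha(X_n) \leq y < L$ for large $n$, contradicting $q^+_\alpha(X_n) \geq L$. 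The subtle point is to interpose the strictly smaller $z$, which is what lets us pass from the almost-sure convergence of $X_n$ to the event inclusion without being obstructed by a possible jump of $F_X$ at $q^+_\alpha(X)$.

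For (e), I would use the nondecreasing monotonicity of $\beta \mapsto q^+_\beta(X)$ (a direct consequence of the definition) and pick $k \in \N$ with $k \geq q^+_\alpha(X)$, which is possible by (a). Then for every $\beta \in (0,\alpha)$ and $m \geq k$, $q^+_\beta(X) \leq q^+_\alpha(X) \leq m$. Setting $Y_m := \min\{X,m\}$, one checks that $F_{Y_m}(x) = F_X(x)$ for $x < m$ while $F_{Y_m}(x) = 1$ for $x \geq m$. Every $x < q^+_\beta(X)$ is then strictly less than $m$ and satisfies $F_X(x) \leq \beta$ (else $x$ would belong to the infimum-defining set), so $F_{Y_m}(x) = F_X(x) \leq \beta$; this gives $q^+_\beta(Y_m) \geq q^+_\beta(X)$, while the reverse inequality follows from (b) since $Y_m \leq X$.
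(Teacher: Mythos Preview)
Your argument is correct and, in several places, takes a cleaner route than the paper. The only slip is in~(a): the claimed equivalence ``$q^+_\beta(X)>t$ iff $F_X(t)\le\beta$'' fails in the backward direction when $F_X(t)=\beta$ and $F_X(t')>\beta$ for all $t'>t$ (then $q^+_\beta(X)=t$). What actually holds is $F_X(t)<\beta\Rightarrow q^+_\beta(X)>t\Rightarrow F_X(t)\le\beta$, which already sandwiches $\{\beta:q^+_\beta(X)>t\}$ between $(F_X(t),1)$ and $[F_X(t),1)$ and gives the Lebesgue measure $\probp(X>t)$ you use; so your Fubini computation and the resulting identity $\int_0^1|q^+_\beta(X)|\dd\beta=\E_\probp(|X|)$ are valid. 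By contrast, the paper first treats $X\ge0$ to get $\int_0^1 q^+_\beta(X)\dd\beta=\E_\probp(X)$ and then bounds the general case via $|q^+_\beta(X)|\le 2\,q^+_\beta(|X|)$, which yields only $\int_0^1|q^+_\beta(X)|\dd\beta\le 2\,\E_\probp(|X|)$; your two-tail layer-cake is sharper and no harder. In~(d) the paper argues via convergence in distribution at a continuity point of $F_X$, whereas your use of the monotone events $\{X_n\le y\}$ and the interposed point $z<y$ is more self-contained and avoids invoking weak convergence altogether. In~(e) the paper introduces an auxiliary level $\gamma\in(\alpha,1)$ and chooses $k>q^+_\gamma(X)$, while your direct choice $k\ge q^+_\alpha(X)$ together with $q^+_\beta(X)\le q^+_\alpha(X)\le m$ for $\beta<\alpha$ suffices and is shorter.
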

\begin{proof}
We only prove integrability in (a) and the assertions in (d) and (e) as the other statements are straightforward to verify by definition.

To establish integrability in (a), assume first that $X\ge0$ $\probp$-a.s., in which case $q^+_\beta(X)\ge0$ for every $\beta\in(0,1)$. For all $x\in\R$ and $\beta\in(0,1)$ we have
\[
q^+_\beta(X)>x \ \implies \ F_X(x)\le \beta \ \implies \ q^+_\beta(X)\ge x.
\]
By Fubini's theorem, we therefore obtain
\begin{align*}
\E(X) &= \int_0^\infty(1-F_X(x))\dd x = \int_0^\infty\int_0^1\ind_{[F_X(x),1]}(\beta)\dd \beta \dd x \\
&= \int_0^1\int_0^\infty\ind_{[0,q^+_\beta(X)]}(x) \dd x \dd \beta = \int_0^1q^+_\beta(X) \dd \beta,
\end{align*}
proving the desired assertion. For a generic $X\in L^1$ it suffices to observe that, for every $\beta\in(0,1)$, we have $q^+_\beta(X)=q^+_\beta(X^+)$ if $q^+_\beta(X)\ge0$ and $q^+_\beta(X)=q^+_\beta(-X^-)$ if $q^+_\beta(X)<0$, whence we derive that $|q^+_\beta(X)|=q^+_\beta(X^+)-q^+_\beta(-X^-)\le 2q^+_\beta(|X|)$ by (b).

To prove (d), take a sequence $(X_n)\subset L^1$ such that $X_n\downarrow X$ but assume that $q^+_\alpha(X)<\inf_{n\in\N}q^+_\alpha(X_n)$. Being increasing, $F_X$ has at most countably many discontinuity points. Hence, we find a continuity point $x\in\R$ for $F_X$ such that $q^+_\alpha(X)<x<\inf_{n\in\N}q^+_\alpha(X_n)$. As a result, $F_{X_n}(x)\le \alpha<F_X(x)$ for every $n\in\N$ but $F_{X_n}(x)\to F_X(x)$ by convergence in distribution, which is implied by almost-sure convergence. This is impossible and yields $q^+_\alpha(X_n)\downarrow q^+_\alpha(X)$ by (b).

Finally, to prove (e), take $\gamma\in(\alpha,1)$ and $k\in\N$ such that $k>q^+_\gamma(X)>q^+_\alpha(X)$. By (b), to conclude the proof, it suffices to show that, for all $\beta\in(0,\alpha)$ and $m\in\N$ with $m\ge k$, we have $q^+_\beta(X)\le q^+_\beta(\min\{X,m\})$. To this effect, take an arbitrary $x\in\R$ such that $F_{\min\{X,m\}}(x)>\beta$. As $q^+_\beta(\min\{X,m\})\le q^+_\beta(X)<q^+_\gamma(X)$ again by (b), we may assume without loss of generality that $x<q^+_\gamma(X)$. It is then easy to see that $F_X(x)=F_{\min\{X,m\}}(x)>\beta$, implying that $q^+_\beta(X)\le x$. This yields $q^+_\beta(X)\le q^+_\beta(\min\{X,m\})$ as desired.
\end{proof}


\end{document}